\newcommand{\mc}{\mathcal}
\newcommand{\bm}{\boldsymbol}
\newcommand{\bth}{\boldsymbol{\theta}}
\newcommand{\qedsymbol}{$\blacksquare$}
\newenvironment{proof}
    {
      \emph{Proof.}
    }
    {
      \hfill\qedsymbol
    }    
\newtheorem{Proposition}{Proposition}
\def\BibTeX{{\rm B\kern-.05em{\sc i\kern-.025em b}\kern-.08em
    T\kern-.1667em\lower.7ex\hbox{E}\kern-.125emX}}
\begin{document}
\title{Deep Reinforcement Learning for IoT Networks: Age of Information and Energy Cost Tradeoff}


\author{ 
Xiongwei~Wu$^{1,2}$, Xiuhua Li$^3$, Jun Li$^{4}$, P. C. Ching$^1$, H. Vincent Poor$^2$\\
$^1$Dept. of Electronic Engineering, The Chinese University of Hong Kong, Shatin, Hong Kong SAR, China\\
$^2$Dept. of Electrical Engineering, Princeton University, Princeton, USA\\
$^3$School of Big Data \& Software Engineering, Chongqing University, Chongqing, China\\
$^4$School of Electronic \&  Optical Engineering, Nanjing University of Science and Technology, Nanjing, China\\
E-mail: \{xwwu, pcching\}@ee.cuhk.edu.hk; lixiuhua1988@gmail.com; jun.li@njust.edu.cn; poor@princeton.edu
%
\thanks{This work was supported in part by the Global Scholarship Programme for Research Excellence from CUHK, and in part by the U.S. National Science Foundation under Grant CCF-1908308.}
}
\maketitle
\begin{abstract}
In most Internet of Things (IoT) networks, edge nodes are commonly used as to relays to cache sensing data generated by IoT sensors as well as provide communication services for data consumers. However, a critical issue of IoT sensing is that data are usually transient, which necessitates temporal updates of caching content items while frequent cache updates could lead to considerable energy cost and challenge the lifetime of IoT sensors. To address this issue, we adopt the Age of Information (AoI) to quantify data freshness and propose an online cache update scheme to obtain an effective tradeoff between the average AoI and energy cost. Specifically, we first develop a characterization of transmission energy consumption at IoT sensors by incorporating a successful transmission condition. Then, we model cache updating as a Markov decision process to minimize average weighted cost with judicious definitions of state, action, and reward. Since user preference towards content items is usually unknown and often temporally evolving, we therefore develop a deep reinforcement learning (DRL) algorithm to enable intelligent cache updates. Through trial-and-error explorations, an effective caching policy can be learned without requiring exact knowledge of content popularity. Simulation results demonstrate the superiority of the proposed framework.  
\end{abstract}

\section{Introduction}
In the foreseeable future, billions of electronic devices (e.g., smartphones, healthcare sensors, vehicles, smart cameras, smart home appliances, etc.), are anticipated to connect to the Internet, which constitutes the {\it Internet of Things} (IoT) \cite{madakam2015internet,chen2019artificial}. Massive numbers of IoT sensors will inevitably generate a tremendous amount of data, which could impose a heavy traffic burden and make wireless networks extremely congested. To cope with these challenges, pushing network resources from the cloud to the edge of wireless networks has been deemed as a promising approach for reducing traffic burden and enhancing user quality of service (QoS) in future IoT networks. 

Considerable research attention has been devoted to caching policies to optimize communication performance metrics, e.g., transmission delay, traffic load, and power consumption \cite{wu2020joint,wu2019jointMDS,li2018hierarchical}. 
However, these studies generally focus on caching multimedia content items that are usually in-transient once they are produced. In contrast, IoT sensing data cached in edge nodes are {\it transient} and could become outdated as time goes by. Hence, previous content caching strategies could not be readily utilized to provide fresh data in IoT services. 

To capture data freshness of transient content items, the {\it Age of Information} (AoI) has  emerged as an effective performance metric. AoI is defined by the time elapsed after the generation of IoT sensing data \cite{kaul2012real,sun2017update}. It is desirable to minimize the average AoI of transient content items in the design of caching policies. On the other hand, frequent cache updates will cause considerable energy consumption at IoT sensors that usually have a rather limited battery capacity. To accommodate vast numbers of smart devices and extend the lifetime of IoT sensors, efficient cache update designs are needed to be developed towards fifth-generation (5G) and beyond communications.  

Recently, machine learning, e.g., deep reinforcement learning (DRL), has been considered as a promising tool for resource management in IoT networks \cite{chen2019artificial}. Some studies have utilized DRL to develop caching policies for IoT sensing. For instance, the study in \cite{wang2020federated} investigated a caching strategy design for the IoT by using federated DRL. However, it did not take into consideration data freshness. The authors in \cite{ma2020deep,ceran2018reinforcement} studied caching transient content items by minimizing the average AoI plus cache update cost. The cost of updating content in these studies was considered as the number of transmissions between sensors and edge nodes. This simple measurement was also used in \cite{hatami2020age} by assuming unit energy consumption per transmissions. A similar idea was also considered in \cite{ceran2019reinforcement}. Generally, the above-mentioned studies ignore the impact of time-varying content popularity. 

 This paper considers the scenario in which IoT sensors may produce the inhomogeneous size of transient content items that are required to be stored at the edge node via wireless links; meanwhile, the statistics of user requests towards IoT sensing data are uncertain and  temporally evolving. We propose an intelligent policy to attain an equitable tradeoff between the average AoI and cache update cost, which is quantified as transmission energy consumption rather than the number of cache updates. 
The main contributions of this work are summarized as follows.
\begin{itemize}
  \item We investigate the issue of how to preserve data freshness while reducing energy consumption at IoT sensors. Particularly, we consider cache updating given the inhomogeneous size of IoT sensing data, characteristics of wireless channels, and time-varying content popularity. To tackle the resulting decision-making in such a complex and dynamic environment, we propose a novel DRL-based framework. 
  \item We develop a characterization of transmission energy consumption for uploading sensing data from IoT sensors to the edge node via wireless links. Particularly, we consider a realistic condition that wireless transmissions are effective when the received signal-to-noise ratio (SNR) is beyond a certain threshold. 
  \item We conduct simulations to demonstrate that the proposed deep Q-network (DQN) algorithm can significantly reduce energy cost while slightly compromising average AoI. Under the scenario being studied, transmission energy consumption at IoT sensors witnesses a reduction of $52.7\%$, whereas average AoI is increased by 2.41 epochs compared with AoI-oriented results.
\end{itemize}
The remainder of this paper is organized as follows. Section II introduces the system model. Section III presents the MDP problem formulation, and Section IV develops a DRL-based algorithm. Section V presents simulation results, and Section VI concludes the paper.

\section{System Model}
\subsection{System Operation}
As illustrated in Fig. \ref{system_iot}, we consider an IoT network, where the edge node (e.g., a small-cell base station) is deployed at the edge of wireless networks, which is connected to the cloud through fronthaul. There is a total of $F$ randomly distributed IoT sensors. Endowed with caching and computing units, the edge node is able to serve as a relay between data producers (e.g., IoT sensors) and data consumers (e.g., mobile users). More precisely, the edge node maintains a cache unit to aggregate sensing data produced by IoT sensors within its coverage; meanwhile, users can submit their requests to the edge node and retrieve desired content items for data analysis. We consider the scenario, where no direct links present between IoT sensors and users, similar to \cite{niyato2016novel}. Let $\mc F = \{1, 2, \cdots, F\}$ be the set of all indices of sensors. 
In addition, system operation time is assumed to be slotted into a sequence of epochs, i.e., $t = 1, 2, \cdots$. The sensing data cached at the edge node may be dynamically updated as time passed; but every content item, generated by a certain sensor, owns a specific content item index and generation epoch. As previously stated, we term IoT sensing data as {\it transient content}. For instance, content item\footnote{We slightly abuse the notation, and let $f$ denote the index of either an IoT sensor or the associated content item.} $f$ with generation epoch $v_f^t$ means that, at epoch $t$, the caching content item available at the edge node is generated by sensor $f$ at epoch $v_f^t$; the associated generation epoch $v_f^t$ could be reset, once sensor $f$ is determined to upload a new measurement of content item $f$ into the edge node.
\begin{figure}[t]
  \centering
  \includegraphics[scale=1]{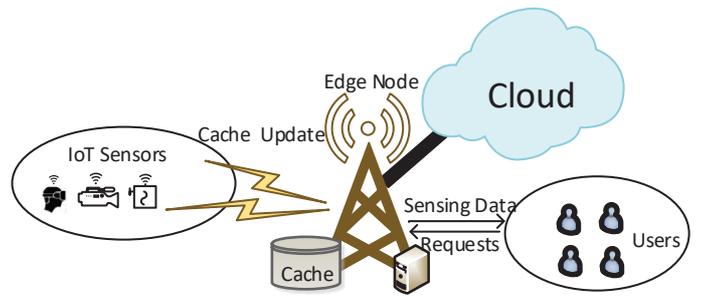}
  \caption{Illustration of an IoT network. }
  \label{system_iot}
\end{figure} 
\subsection{Age of Information}
We introduce a QoS metric, i.e., AoI, to capture how fresh a transient content item is. Particularly, the AoI of content item $f$ at epoch $t$, i.e., $o_f^t$, is defined as how many epochs have elapsed since this content item was generated. Accordingly, it gives rise to the following equation:
\begin{align}
	o_f^t = \max \{t - v_f^t, 1\}, \forall f\in \mc F,
\end{align}
which can take values from $\{1, 2, \cdots, T_{\max}\}$; and $T_{\max}$ denotes the upper limit \cite{abd2019reinforcement}. Let $N_f^t$ be the number of user requests for content item $f$ that are observed by the edge node at epoch $t$. Thus, the average AoI for satisfying user requests at epoch $t$ can be given as follows:
\begin{align}
	O^t = \frac{\sum_{f \in \mc F} o_f^t N_f^t} {\sum_{f \in \mc F} N_f^t}. \label{eq:avgaoi}
\end{align}
As caching content items gradually become outdated, a reasonable cache update is needed to serve user requests at the coming epochs. For ease of discussion, we assume that each IoT sensor is able to upload its sensing data to the edge node within a single epoch. That is, the AoI of a stale content turns to be 1 as long as the associated sensor is chosen to upload the current measurement of this content item. For instance, as depicted in Fig. \ref{fig:aoidemo}, when a transient content is selected to update at epoch $t_1$, the corresponding AoI reduces to 1 at epoch $t_1+1$; otherwise, the AoI will increment by 1 after every epoch.   
\subsection{Transmission Energy Consumption}
To carry out cache updating, IoT sensors need access the edge node via wireless links; and orthogonal frequency bandwidths are allocated to different IoT sensors to avoid interference. Thus, the received SNR at the edge node concerning the $f$-th sensor can be given by:
\begin{align}
	\gamma_f = \frac{P_{f} \chi_f^2 |\kappa_f|^2}{N_0B}, \forall f \in \mc F, 
\end{align}
where $P_f$ denotes transmission power at the $f$-th sensor; coefficient $\chi_f$ denotes the impacts of the path loss and antenna gain; $\kappa_f$ denotes the small-scale fading component; $N_0$ is the power spectrum density; and $B$ denotes the channel bandwidth. We further assume that $|\kappa_f|$ follows the following Rayleigh distribution, i.e., $x\exp(-x^2/2)$. We consider transmissions between IoT sensors and the edge node are effective under the condition that the received SNR surpasses a pre-specified threshold $\gamma_{th}$. Given the storage size of sensing data from the $f$-th sensor as $s_f$ bits, we present the average transmission energy consumption in the following Proposition. 
\begin{Proposition}
	The average transmission energy consumption $\bar E_f$ for the $f$-th IoT sensor ($\forall f \in \mc F)$ to upload its sensing data is given by: 
	\begin{align}
		 \bar E_f = \frac{\log (2) \times P_f s_f}{\log(2) \times r_{th} \exp\left(-\frac{\gamma_{th}}{2\beta_f}\right)  + B\exp\left(\frac{1}{2\beta_f}\right)\mc \rho_f(\gamma_{th} +1)}, \label{eq:avgegy}
	\end{align}
	where function $\rho_f(\cdot)$ is defined as:
	\begin{align}
	 	\rho_f(x) \triangleq \int_{x}^{+\infty} \frac{1}{x} \exp\bigg(-\frac{x}{2 \beta_f}\bigg) dx,
	 \end{align} 
	 and $\beta_f = P_f \chi_f^2 / (N_0B)$; $r_{th}$ denotes the data rate threshold, given as follows: 
	 \begin{align}
	 	r_{th} \triangleq \log_2(1 + \gamma_{th}).
	 \end{align}
\end{Proposition}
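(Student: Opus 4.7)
The plan is to write the average energy as $\bar E_f = P_f s_f / \bar r_f$, where $\bar r_f$ is the average effective data rate of the link under the successful-transmission condition, and then to evaluate $\bar r_f$ in closed form by integrating over the random fading.

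First I would characterize the distribution of the instantaneous SNR $\gamma_f = \beta_f |\kappa_f|^2$. Since $|\kappa_f|$ has Rayleigh density $x\exp(-x^2/2)$, the change of variables $y = x^2$ shows that $|\kappa_f|^2$ is exponentially distributed with mean $2$, so $\gamma_f$ has density $f_{\gamma_f}(\gamma) = (1/(2\beta_f))\exp(-\gamma/(2\beta_f))$ for $\gamma \ge 0$. Next I would invoke the effective-transmission condition: the instantaneous Shannon rate $B\log_2(1+\gamma_f)$ is realized only when $\gamma_f \ge \gamma_{th}$, giving $\bar r_f = B \int_{\gamma_{th}}^{+\infty} \log_2(1+\gamma)\, f_{\gamma_f}(\gamma)\, d\gamma$. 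Under the operational picture that $s_f$ bits are delivered at average rate $\bar r_f$ while the sensor transmits at power $P_f$, the average energy is $\bar E_f = P_f s_f / \bar r_f$.

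The core computation is evaluating the above integral. I would apply integration by parts with $u = \log_2(1+\gamma)$ and $dv = (1/(2\beta_f)) \exp(-\gamma/(2\beta_f))\, d\gamma$, so that $v = -\exp(-\gamma/(2\beta_f))$. The boundary term at $\gamma = \gamma_{th}$ contributes $r_{th}\exp(-\gamma_{th}/(2\beta_f))$, while the boundary at infinity vanishes because the exponential decay dominates the logarithmic growth. The remaining piece, $(1/\log 2)\int_{\gamma_{th}}^{+\infty} \exp(-\gamma/(2\beta_f))/(1+\gamma)\, d\gamma$, is handled by the shift $u = 1+\gamma$: this pulls out a factor $\exp(1/(2\beta_f))$ and rewrites the integrand as $(1/u)\exp(-u/(2\beta_f))$ on $[1+\gamma_{th}, +\infty)$, which, by the definition in the statement, is $\rho_f(\gamma_{th}+1)$.

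Collecting the two contributions into $\bar r_f$ and then multiplying numerator and denominator of $\bar E_f = P_f s_f / \bar r_f$ by $\log 2$ yields the displayed expression. The main obstacle is the integration-by-parts step together with recognizing the shifted exponential integral as $\rho_f(\gamma_{th}+1)$; the remaining manipulations are routine algebra. A minor subtlety worth flagging is that $\rho_f$ as defined in the statement reuses the symbol $x$ for both the lower limit and the integration variable, so I would read it as $\rho_f(x) = \int_x^{+\infty} (1/u) \exp(-u/(2\beta_f))\, du$ before performing the substitution.
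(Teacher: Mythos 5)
Your proposal follows essentially the same route as the paper's (very brief) sketch: compute the fading/outage statistics of the received SNR, evaluate the expected effective data rate over the non-outage region, and obtain the energy as transmit power times average delay $P_f s_f/\bar r_f$. The integration by parts and the shift $u=1+\gamma$ are carried out correctly, and your reading of the overloaded variable $x$ in the definition of $\rho_f$ is the right one. One point worth flagging: your calculation gives $\bar r_f = B\bigl[r_{th}e^{-\gamma_{th}/(2\beta_f)} + \tfrac{1}{\log 2}e^{1/(2\beta_f)}\rho_f(\gamma_{th}+1)\bigr]$, so after multiplying through by $\log 2$ the bandwidth $B$ should multiply \emph{both} terms of the denominator, whereas in \eqref{eq:avgegy} it multiplies only the $\rho_f$ term. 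Dimensional analysis sides with your version (the denominator must carry units of bits per second throughout), so this appears to be a typo in the stated proposition rather than an error in your derivation; otherwise the argument is complete.
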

\begin{proof}
	Due to the page limit, we only sketch the basic idea here. Since the edge node would fail to decode information if the received SNR is lower than the required threshold $\gamma_{th}$, one can first compute the corresponding outage probability due to channel fading. The next step is to estimate the average transmission delay by calculating the expected transmission data rates. Finally, the average transmission energy consumption is given by the product of transmission power and average transmission delay. 
\end{proof}
\begin{figure}[t]
  \centering
  \includegraphics[scale=1.9]{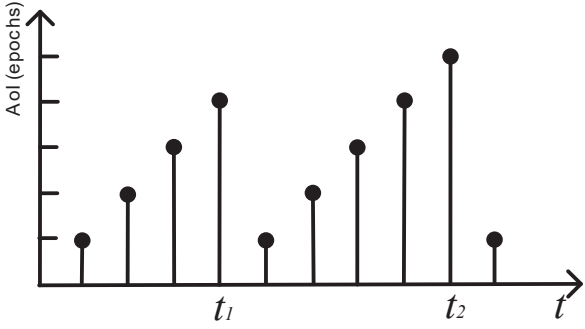}
  \caption{Illustration of the AoI of a specific transient content. Epochs of cache updates are denoted by $t_1$ and $t_2$.}
  \label{fig:aoidemo}
\end{figure}

Clearly, owing to limited battery levels at sensors and massive connections in IoT networks, it is crucial to find the best choice at each epoch to perform cache update so as to obtain an effective tradeoff between AoI and energy cost. Note that, the average AoI (e.g., see \eqref{eq:avgaoi}) highly depends on content popularity distribution that usually exhibits temporal dynamics; meanwhile, energy consumption (e.g., see \eqref{eq:avgegy}) comes to storages of sensing data and channel statistics which are usually inhomogeneous among different sensors. Aware of this, we formulate an MDP based online cache update problem in the following section. 
\section{MDP Problem Formulation}
Typically, an MDP can be described by a tuple $(\mc S, \mc A, \mc P, R, \gamma)$, where $\mc S$ denotes the state space containing all possible states $\bm S$; $\mc A$ denotes the action space collecting all possible actions $\bm A$; $\mc P$ collects transition probabilities $Pr\{\bm S'|\bm S, \bm A\}$; $R$ denotes a reward fed back to the agent after executing an action; and $\gamma \in [0,1)$ is a discount factor. In the online cache update problem, the edge node is anticipated to act as the agent; and we customize the above-mentioned elements in an MDP as follows. 
\begin{itemize}
	\item {\bf State:} At each epoch, the edge node has exact knowledge of the AoI of caching content items as well as user requests. Hence, we define the system state as follows:
	\begin{align}
		\bm S^t = \left(\{o_f^t\}_{f \in \mc F}, \{N_{f}^t\}_{f\in \mc F}\right). 
	\end{align}
	\item {\bf Action:} Similar to study \cite{hatami2020age}, let $\{0, 1, \cdots, F\}$ be the action space\footnote{The proposed framework can be extended to the case where multiple content items are selected at each epoch. This comes at the cost of energy consumption and bandwidth occupation.}. When action $\bm A^t = 0$, it implies that no content is selected; otherwise, we push the new version of content item $\bm A^t \in \mc F$ into the cache unit. As such, it gives rise to the following relation:
    \begin{align}
		o_f^{t+1} = (o_f^t + 1) \times \mc I(f, \bm A^t) + \mc I (f, \bm A^t), \forall f \in \mc F, 
	\end{align}
	where $\mc I(\cdot)$ is an indicator function\footnote{When variables $x,y$ are equal, $\mc I(x,y) = 1$; otherwise, $\mc I(x,y) = 0$}. 
	Clearly, cache update should be carried out after the occurrence of state $\bm S^t$, i.e., after user requests are revealed. Subsequently, the system is expected to transfer to a new state $\bm S^{t+1}$ with transition probability $Pr\{\bm S^{t+1}|\bm S^{t}, \bm A^t\}$.  
	\item {\bf Reward:} The agent in an MDP is required to receive a reward signal $R^{t+1}$ along with the appearance of state $\bm S^{t+1}$. Recall that we aim to minimize the average AoI for satisfying user requests during the upcoming epoch while reducing transmission energy consumption. Hence, it is envisioned to minimize the following average weighted cost, i.e.:
	\begin{align}
		C^{t+1} = \frac{\sum_{f \in \mc F} o_f^{t+1} N_f^{t+1}} {\sum_{f \in \mc F} N_f^{t+1}} + \eta \bar E_f|_{f = \bm A^t}, \label{eq:cost}
	\end{align}
	where the first term on the right-hand side denotes the average AoI at epoch $t+1$; and $\eta \geq 0$ is a constant to strike the balance between the average AoI and energy cost. We denote $\bar E_0 = 0$ for notational convenience.  In accordance with the objective in \eqref{eq:cost}, the reward signal of the proposed framework is designed as follows:
	\begin{align}
		R^{t+1} = - C^{t+1}, \label{eq:insreward}
	\end{align}
	which is supposed to be received at epoch $t+1$. 
\end{itemize}
To this end, we need to find an optimal policy $\pi^*$ that maximizes the expected discounted cumulative reward, i.e.:
\begin{align}
	\pi^* = \arg\max_{\pi} \mathbb{E} [V^t|\pi], \label{eq:rl} 
\end{align}
where the expectation is over all possibilities of $\{\bm S^t, \bm A^t, \bm S^{t+1}, R^{t+1}\}$, and the discounted cumulative reward is given by:
\begin{align}
	V^t = \sum_{\tau = 0}^{\infty} (\gamma)^{\tau} R^{t+\tau+1}.
\end{align}
Since transition probability, i.e., $Pr\{\bm S'|\bm S, \bm A\}$, is generally difficult to acquire in practical applications, we move on to propose a DRL-based algorithm to address this problem through trial-and-error interactions with the environment.

\section{Proposed DQN Based Algorithm}
In this section, we propose an intelligent cache update design for an IoT network by adopting the state-of-art RL approaches, e.g., DQN. The key idea of this approach is built on utilizing deep neural networks (DNNs) to learn the Q-value function:
\begin{align}
	Q(\bm S, \bm A) = \mathbb[V^t|\bm S = \bm S^t, \bm A = \bm A^t, \pi^*], 
\end{align}
which indicates the expected cumulative reward after executing action $\bm A^t$ and then following policy $\pi^*$; as such, an optimal action can be given by the one attaining the maximum Q-value under current state $\bm S^t$, i.e.:
\begin{align}
	\bm A^* = \arg\max_{\bm A} Q(\bm S^t, \bm A).
\end{align}
According to the {\it Bellman Optimality Equality} in \cite{sutton2018reinforcement}, we have the following recursive result:
\begin{align}
	Q(\bm S^t, \bm A^t) = R^{t+1} + \gamma \max_{\bm A' \in \mc A} Q(\bm S^{t+1}, \bm A').
\end{align}

In general, DQN entails maintaining two networks, i.e., Q-network and target Q-network \cite{mnih2015human}. Specifically, Q-network $Q_{\bth}(\bm S, \bm A)$ can be constructed by a plain DNN, which is parametrized by $\bth$. Readers are referred to \cite{mnih2015human} for greater detail. Target Q-network $Q_{\bth^-}(\bm S, \bm A)$  is parametrized by $\bth^-$, which is designed in the same manner as Q-network and can be used to calculate target values for network training. 

The training procedure is introduced as follows. To aggregate training data, we leverage {\it Replay Buffer (RB)} to store historical experiences, e.g., $\xi^t = (\bm S^t, \bm A^t, \bm S^{t+1}, R^{t+1})$. We assume that {\it RB} can store a total of $N$ experiences; and the stalest experience should be replaced by the fresh one as long as {\it RB} is fully filled. Subsequently, at each iteration, we randomly draw a mini-batch of experiences $\Xi_N$ from {\it RB} to update $\bth$ by minimizing the following loss:
\begin{align}
	\hfill Loss = \mathbb{E}_{\xi^t \sim \Xi_N} \bigg[\big(y^t  - Q_{\bth}(\bm S^t, \bm A^t)\big)^2 \bigg], \label{eq:iotloss}
\end{align} 
where $y^t$ denotes the target value, i.e.:
\begin{align}
	R^{t+1} + \gamma \max_{\bm A'} Q_{\bth^-}(\bm S^{t+1}, \bm A'). \label{eq:target}
\end{align}
Thus, adopting stochastic gradient descent approaches, parameter $\bth$ can be updated as follows\footnote{We consider that parameter $\bth$ and state $\bm S$ are vectorized with proper dimensions.}:
\begin{align}
	\bth \leftarrow \bth - \alpha \bigg[\big(Q_{\bth}(\bm S^t, \bm A^t) - y^t  \big) \nabla_{\bth} Q_{\bth}(\bm S^t, \bm A^t) \bigg], \label{eq:grad}
\end{align}
where $\alpha$ is the learning rate. 
With regard to $\bth^-$, it can be updated by $\bth^- \leftarrow \bth$ every $T_0$ iterations. 
Concerning exploration, we adopt the $\varepsilon$-greedy policy to generate actions: at each epoch, taking an action randomly drawn from $\{0, \cdots, F\}$ with probability $\varepsilon$ whereas taking an optimized action $\bm A^* = \arg\max_{\bm A} Q_{\bth}(\bm S, \bm A)$ with probability $1-\varepsilon$. This is because the success of RL replies to visiting different state-action pairs so as to aggregate sufficient experiences to infer better actions and avoid suboptimal estimations of Q-value function. Finally, we summarize the proposed DRL-based cache update in Algorithm \ref{alg:iotDRL}. 
\begin{algorithm}[!t]
  \caption{DRL-Based Cache Update for IoT Networks}\label{alg:iotDRL}
  \begin{algorithmic}
    \State Initialize $F$, $ \bm S^0 $, $ \varepsilon $, $ \gamma $, $ \alpha $, $ T_0 $
    \State Initialize parameter of Q-network $  \bm{\theta } $, 
    \State Initialize parameter of target Q-network $ \bm{\theta^- }$ 
    \State Initialize {\it RB}
    \For{$ t=0,1,2, \cdots $}
    \State Input $\bm S^t$ to Q-network and obtain $Q_{\bth}(\bm S^t, \bm A), \forall \bm A \in \mc A$
    \State Execute action $ \bm A^t $ according to $ \varepsilon $-greedy policy
    \State Observe $ \bm S^{t+1}$, $ R^{t+1} $
    \State  Store $ \left( \bm S^t, \bm A^t, \bm S^{t+1}, R^{t+1}\right) $ into {\it RB}.
    \If {Remainder $(t, T_0) == 0$}
    \State $ I=1 $
    \Else 
    \State $ I=0 $  
    \EndIf
    \State \Call{TrainDQN}{$ \gamma $, $ \alpha $, $ I $, {\it RB}, $ \bm{\theta} $, $ \bm{\theta^-} $}
    \EndFor \\
    \Procedure{TrainDQN}{$ \gamma $, $ \alpha $, $ I $, {\it  RB}, $ \bm{\theta} $, $ \bm{\theta^-} $}
    \State Randomly sample a batch experiences $\Xi_N$ from {\it RB}
    \For{each $\xi^t = \left( {\bm S^t, \bm A^t, \bm S^{t+1}, R^{t+1}} \right) \in \Xi_N $}
    \State Calculate $y^t$ by \eqref{eq:target}
    \EndFor
    \State Calculate $Loss$ by \eqref{eq:iotloss}
    \State $\bth \leftarrow \bth - \alpha \nabla_{\bth} Loss$
    \If{$ I==1 $}
    \State Update the target Q-network by $ \bm{\theta^-} \leftarrow \bm{\theta } $
    \EndIf
    \EndProcedure
  \end{algorithmic}
\end{algorithm}
\section{Performance Evaluation}
\subsection{Simulation Setup}
In this section, we investigate the performance of the proposed algorithm. We consider the following IoT network: one edge node is deployed which covers a circle of radius 100 m; a total of 20 IoT sensors are randomly distributed within the coverage of the edge node; the storage size of each content item is randomly drawn from $[50, 100]$ MB. Regarding communications between the edge node and sensors, the channel bandwidth is set as 10 MHz; large-scale fading is specified by the model used in \cite{Wu2019ICC}; noise spectrum density is -172 dBm; and transmission power of IoT sensors is 20 dBm. We consider there are at most 100 users requesting content by following Zipf distributions \cite{Wu2019ICC}, i.e.:
\begin{align}
	p_{f} = \zeta_{f}^{-\kappa}/\sum_{f' \in \mc F} \zeta_{f'}^{-\kappa},
\end{align}
where rank orders of content items, e.g., $\{\zeta_f\}$, are randomly evolving according to a certain probability transmission matrix; and the skewness factor, e.g., $\kappa$, is randomly drawn from $\{0.5, 1, 1.5, 2\}$. In the default setup, we consider that the average AoI and energy consumption are of equal importance, i.e., $\eta =1$. 

To implement algorithm, the Q-network compromises three hidden layers with 512, 256, 128 neurons, respectively. The learning rate is set as 0.001. In $\varepsilon$-greedy policy, we take a random action with probability $\varepsilon = 0.9$; then, $\varepsilon$ is decayed by 0.995 every iteration until 0.05. A mini-batch of 100 experiences is randomly sampled from 
the {\it RB} that has at most 5000 experiences. We update target Q-network every 100 epochs. The discount factor is 0.99. In addition, the following baselines are considered for algorithm comparisons:
\begin{itemize}
	\item {\bf Most Popular Update (MPU):} At every epoch, we consider to update the content item that receives the highest attention from users, i.e.: 
	\begin{align}
		\bm A^t = \arg\max_{f\in \mc F} \frac{N_f^t} {\sum_{f'\in \mc F} N_{f'}^t}.
	\end{align}
	\item {\bf Oracle Update (OU):} We assume that the exact knowledge of up-coming user requests, e.g., $\{N_f^{t+1}\}$, is available at current epoch $t$. Then, we conduct cache update similar to the proposed DQN. This scheme simply serves as a baseline here and is not possible in reality.
	\item {\bf Random Update (RU):}  At every epoch, we randomly select an action from the action space, i.e., $\{0, 1, \cdots, F\}$. This serves as a lower bound for the proposed algorithm.
\end{itemize}  
\begin{figure}[!h]
  \centering
  \includegraphics[scale=0.55]{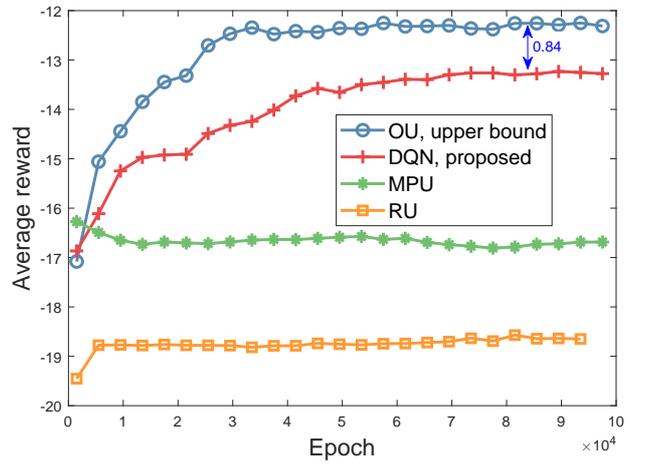}
  \caption{Convergence behavior of the proposed algorithm.}
  \label{fig:iot1}
\end{figure}
\subsection{Learning Curves} 
We first illustrate the learning curves of the proposed algorithm in Fig. \ref{fig:iot1}. For illustrative purposes, we show the moving average results, which are attained by averaging rewards over $N = 10000$ epochs, i.e., $\sum_{\tau = t-N+1}^t R^{\tau}/N$.
Moreover, the resulting reward is less than zero since it is defined as the negative value of average weighted cost; see \eqref{eq:insreward}.
It can be observed that, in the initial stage, the learning curve of the proposed algorithm increases rapidly and surpasses the curve of MPU; after 10000 epochs, it continuously increases; as more states are visited with time elapses, the resulting reward gradually grows larger, eventually converging to a higher level than that of MPU and RU. 
Note that, the average reward achieved by DQN is only 0.84 less than the upper bound, i.e., achieved by OU, whereas it exceeds the results of MPU and RU by 28.8\% and 40.4\%, respectively. This result implies that the proposed DQN algorithm is able to not only track the dynamics of user preferences towards content items but also adapt to energy costs among different sensors in physical-layer transmissions. These observations corroborate the remarkable performance of the proposed algorithm. 

\subsection{Tradeoff Between AoI and Energy Cost}
To investigate the tradeoff between AoI and energy cost, we vary factor $\eta$ and plot the results of average total cost, the average AoI, and average transmission energy in Figs. \ref{fig:iot2} - \ref{fig:iot3}, respectively. As can be seen in Fig. \ref{fig:iot2}, under different $\eta$, the proposed algorithm performs very close to the upper bound, and significantly outperforms other baselines, i.e., MPU and RU. For instance, when $\eta = 20$, the achieved average weighted cost can be reduced by 54.9\% and 44.3\% in comparison to MPU and RU, respectively. These findings again demonstrate the advantages of the proposed DRL-based cache update  and also validate its robustness toward factor $\eta$.   
\begin{figure}[t]
  \centering
  \includegraphics[scale=0.55]{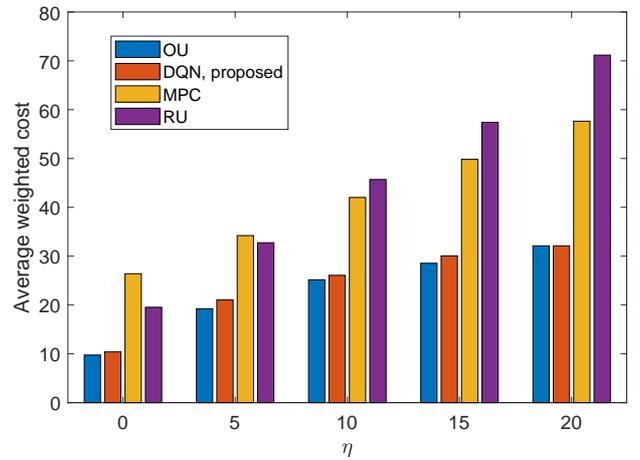}
  \caption{Average weighted cost.}
  \label{fig:iot2}
\end{figure}

\begin{figure}[h]
  \centering
  \includegraphics[scale=0.55]{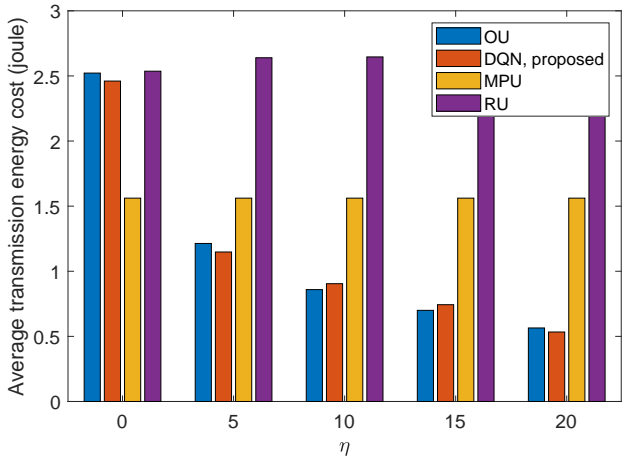}
  \caption{Average transmission energy cost.}
  \label{fig:iot4}
\end{figure}
\begin{figure}[h]
  \centering
  \includegraphics[scale=0.55]{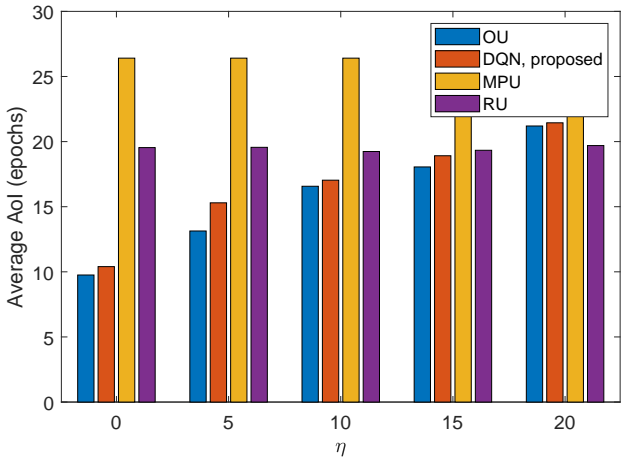}
  \caption{Average AoI.}
  \label{fig:iot3}
\end{figure}
Clearly, a larger $\eta$ implies that we pay more attention to minimizing energy consumption. As can be seen in Fig. \ref{fig:iot4}, average energy costs, achieved by DQN and OU, drop off as $\eta$ becomes larger while other baselines only witness rather slight fluctuations without aware of AoI and energy cost tradeoff. On the other hand, in Fig. \ref{fig:iot3}, we observe a reverse trend; that is, the caching content items become staler as $\eta$ grows larger. We conjecture that cache update could happen less frequently such that sensing data available at the cache unit becomes outdated. It is worth pointing out that when $\eta = 5$, the achieved average AoI (by DQN) only degrades 2.41 while we can see a notable reduction of 52.7\% in energy cost in contrast with the case $\eta = 0$. If we continue increasing $\eta$, the reduction in energy consumption is quite limited yet at the cost of worsening AoI. Hence, it is necessary to choose a suitable $\eta$ to balance AoI and energy cost in real applications.

\section{Conclusion}
In this paper, we have put forth a deep reinforcement learning framework for online cache updating in IoT networks under dynamic content popularity. 
The objective of this framework is to minimize the weighted average AoI plus energy cost. We have developed a characterization of transmission energy consumption at IoT sensors. Through trial-and-error explorations, the proposed DQN algorithm is capable of adapting to temporal dynamics of user requests as well as the inhomogeneous content size and channel statistics among different IoT sensors. Simulation results have been presented to demonstrate the effectiveness of the proposed design and reveal how transmission energy consideration compromises data freshness.
\bibliographystyle{IEEEtran}
\bibliography{references}
\end{document}